\newcommand\ZZ{{\mathds{Z}}}
\newcommand\RR{{\mathds{R}}}
\newcommand\CC{{\mathds{C}}}
\newtheorem{proposition}{Proposition}
\newtheorem{corollary}{Corollary}
\begin{document}

\title{Energy landscape analysis of the two-dimensional nearest-neighbor \texorpdfstring{$\phi^4$}{phi4} model}

\author{Dhagash Mehta}
\email{dbmehta@syr.edu}
\affiliation{Department of Physics, Syracuse University, Syracuse, NY 13244, USA}

\author{Jonathan D.\ Hauenstein}
\email{jhauenst@math.tamu.edu}
\affiliation{Department of Mathematics, Texas A\&M University, College Station, TX 77843-3368, USA}

\author{Michael Kastner}
\email{kastner@sun.ac.za}
\affiliation{National Institute for Theoretical Physics (NITheP), Stellenbosch 7600, South Africa}
\affiliation{Institute of Theoretical Physics,  University of Stellenbosch, Stellenbosch 7600, South Africa}

\date{\today}

\begin{abstract}
The stationary points of the potential energy function of the $\phi^4$ model on a two-dimensional square lattice with nearest-neighbor interactions are studied by means of two numerical methods: a numerical homotopy continuation method and a globally-convergent Newton-Raphson method. We analyze the properties of the stationary points, in particular with respect to a number of quantities that have been conjectured to display signatures of the thermodynamic phase transition of the model. Although no such signatures are found for the nearest-neighbor $\phi^4$ model, our study illustrates the strengths and weaknesses of the numerical methods employed.
\end{abstract}

\pacs{05.50.+q, 64.60.A-, 05.70.Fh}

\maketitle
%------------------------------------------------

\section{Introduction}
The stationary points of the potential energy function or other classical energy functions can be employed to calculate or estimate certain physical quantities of interest. Well-known examples include transition state theory or Kramers's reaction rate theory for the thermally activated escape from metastable states, where the barrier height (corresponding to the difference between potential energies at certain stationary points of the potential energy function) plays an essential role. More recently, a large variety of related techniques has become popular under the name of {\em energy landscape methods} \cite{Wales}, allowing for applications to many-body systems as diverse as metallic clusters, biomolecules and their folding transitions, or glass formers undergoing a glass transition.

In the late 1990s it was observed that properties of stationary points of the potential energy function $V$, i.e.\ configuration space points $q^\text{s}$ satisfying ${\mathrm d}V(q^\text{s})=0$, reflect in dynamical and statistical physical quantities simultaneously and show pronounced signatures near a phase transition \cite{CaPeCo00}. This observation sparked quite some research activity, reviewed in Ref.\ \cite{Kastner08}, including a theorem by Franzosi and Pettini asserting that, at least for a certain class of models, stationary points with $V(q^\text{s})/N=v_\text{c}$ are indispensable for the occurrence of an equilibrium phase transition at potential energy $v_\text{c}$ \cite{FraPe04,*FraPeSpi07}. This theorem requires a number of conditions to be satisfied: The potential energy function $V$ has to have the Morse property, has to be smooth, confining, and of short-range (see \cite{FraPe04,*FraPeSpi07} for a complete list of conditions and their definitions). At the time when these papers were published, one might have still hoped that some of the conditions on $V$ were merely technical, but not essential for the result. However, it became clear soon that the result can not be extended to long-range interacting models \cite{BaroniMSc,*GaSchiSca04,*AnAnRuZa04}, nor to non-confining potentials \cite{Kastner04}: These classes of potentials comprise cases which are particularly amenable to analytic calculations, and a direct relation between phase transitions and stationary points of $V$ could be ruled out through exactly solvable counterexamples.

Originally, the incentive for the study reported in the present article was to investigate the stationary points of a model that satisfies all the conditions required by Franzosi and Pettini \cite{FraPe04,*FraPeSpi07}. This is not an easy task, as in this class there are no exactly solvable models which have a phase transition \footnote{Note that most exactly solvable models with short-range interactions, like for example the two-dimensional nearest-neighbor Ising model, have a discrete configuration space and the energy landscape techniques we are interested in do not apply.}. As a model to study, we then opted for the nearest-neighbor $\phi^4$ model on a two-dimensional square lattice. This model, though not exactly solvable, appears to be relatively simple. Moreover, results on the stationary points of its long-range version were known and readily available for comparison \cite{BaroniMSc,*GaSchiSca04,*AnAnRuZa04}. The potential of the two-dimensional nearest-neighbor $\phi^4$ model is smooth, confining, and of short range. Moreover it has the Morse property for almost all values of the coupling constants (see Appendix \ref{s:appendix} for a proof) and therefore satisfies all requirements of Franzosi and Pettini's theorem.

Much to our surprise, we found that all stationary points $q^\text{s}$ of the potential energy function $V$ have non-positive potential energies, i.e., $V(q^\text{s})\leq0$. From this observation, one can conclude that the result of Franzosi and Pettini, allegedly proven in Ref.\ \cite{FraPe04}, is false. Furthermore, a numerical method put forward in Ref.\ \cite{FraPeSpi00} and applied to the very same two-dimensional $\phi^4$ model yields incorrect results. These findings, and a discussion of their implications, have been published in a Letter \cite{KastnerMehta11}. The non-positivity of the stationary energies $V(q^\text{s})$ was established in that Letter analytically, supported by results obtained with two different numerical methods. The main purpose of the present article is to give a detailed account of these numerical methods and to present a more detailed analysis of the properties of the stationary points of the two-dimensional nearest-neighbor $\phi^4$ model.

In Sec.\ \ref{s:model}, this model is introduced and some of its thermodynamic properties are reviewed. In Sec.\ \ref{s:homotopy}, the first of the numerical methods, namely homotopy continuation, is discussed. It is an algebraic-geometrical technique devised to obtain {\em all}\/ isolated stationary points of a given system of multivariate polynomial equations, but is restricted to fairly small lattice sizes. We have applied this method to square lattices of sizes $3\times3$ and $4\times4$. The stationary points obtained are analyzed with respect to their number, potential energies, indices, and Hessian determinants in Sec.\ \ref{s:properties}. The second numerical method, discussed in Sec.\ \ref{s:NR}, makes use of a globally convergent version of the Newton-Raphson algorithm for searching the zeros of a real-valued function. It can be applied to larger lattice sizes, but provides in general only a subset of the stationary points. We summarize and discuss our findings in the concluding Sec.\ \ref{s:conclusions}.

%------------------------------------------------
\section{Two-dimensional nearest-neighbor \texorpdfstring{$\phi^4$}{phi4} model}
\label{s:model}

On a finite square lattice $\Lambda\subset\ZZ^2$ consisting of $N=L^2$ sites, a real degree of freedom $\phi_i$ is assigned to each lattice site $i\in\Lambda$. By $\mathcal{N}(i)$ we denote the subset of $\Lambda$ consisting of the four nearest-neighboring sites of $i$ on the lattice under the assumption of periodic boundary conditions. The potential energy function of this model is given by
%the two-dimensional nearest-neighbor $\phi^4$ model is given by
\begin{equation}\label{e:V}
V(q)=\sum_{i\in\Lambda}\Biggl[\frac{\lambda}{4!}q_i^4-\frac{\mu^2}{2}q_i^2+\frac{J}{4}\sum_{j\in\mathcal{N}(i)}(q_i-q_j)^2\Biggr],
\end{equation}
where $q=(q_1,\dots,q_N)$ denotes a point in configuration space $\Gamma=\RR^N$
\footnote{Our definition of $V$ coincides with the one in Ref.\ \cite{FraCaSpiPe99}, but differs from \cite{FraPeSpi00} by a factor $1/6$ in the quartic term. Judging from the critical temperatures and energies reported in the latter, as well as from their reference to \cite{FraCaSpiPe99}, we assume that there is a misprint in Ref.\ \cite{FraPeSpi00}. For the main conclusions in Ref.\ \cite{KastnerMehta11} and in the present article, the precise values of any of the constants are not crucial.}%
.
The parameter $J>0$ determines the coupling strength between nearest-neighboring sites and the parameters $\lambda,\mu>0$ characterize a local double-well potential each degree of freedom is experiencing.

In the thermodynamic limit $N\to\infty$ this model is known to undergo, at some critical temperature $T_\text{c}$, a continuous phase transition, in the sense that the configurational canonical free energy
\begin{equation}
f(T)=-\lim_{N\to\infty}\frac{T}{N}\ln\int_\Gamma {\mathrm d}^N\!q\;{\mathrm e}^{-V(q)/T}
\end{equation}
is nonanalytic at $T=T_\text{c}$. The transition is from a ``ferromagnetic'' phase with nonzero average particle displacement to a ``paramagnetic'' phase with vanishing average displacement (see \cite{MilchevHeermannBinder86} for more details as well as for Monte Carlo results).

Since we are interested in whether, and how, the phase transition reflects in the properties of the potential energy landscape, it is more adequate for our purposes to compare not to $T_\text{c}$, but to the critical potential energy per lattice site, $v_\text{c}$, of the transition \cite{Kastner06}. Both quantities are unambiguously related to each other in the thermodynamic limit via the caloric curve $v(T)$. This is true independently of the statistical ensemble used, as these ensembles are known to be equivalent for short-range models like the one we are studying \cite{*[{}] [{, Chapter 2.4.}] Ruelle}.
%------------------------------------------------

%\section{Monte Carlo simulations}

The critical potential energy $v_\text{c}$ is less frequently studied, in fact the only data we could find in the literature are from Monte Carlo simulations of fairly small system sizes $N=20\times20$ in Ref.\ \cite{FraCaSpiPe99}, with parameter values $\lambda=3/5$, $\mu^2=2$, and $J=1$. We use the same values of $\lambda$ and $\mu^2$ in the following, but will show results for a range of couplings $J$. Since the value of $v_\text{c}$ is a crucial benchmark when relating our stationary point analysis to the phase transition of the $\phi^4$ model, we have performed standard Metropolis Monte Carlo simulations for somewhat larger system sizes up to $128\times128$ and $10^7$ lattice sweeps. %A number of observables have been sampled simultaneously, including the ``magnetization''
Some of the Monte Carlo results have already been reported in Ref.\ \cite{KastnerMehta11}. From these plots one can read off a critical potential energy per lattice site of roughly $v_\text{c}\approx2.2$ for coupling $J=1$. A more precise value or an estimate of the statistical error could be obtained by more extensive Monte Carlo simulations and/or a finite-size scaling analysis of the data, but the results as they are will be sufficient for our purposes. We have determined $v_\text{c}$ also for several other couplings, and the results are displayed in Fig.\ \ref{f:vc_vs_J}.
%------------------------------------------------
\begin{figure}\center
\includegraphics[width=0.7\linewidth]{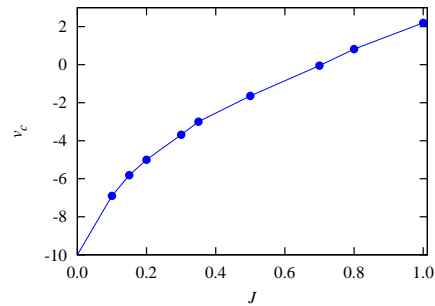}
\caption{\label{f:vc_vs_J}
(Color online) Monte Carlo results for the critical potential energy $v_c$ as a function of the coupling $J$ for the two-dimensional $\phi^4$ model \eqref{e:V} with $\lambda=3/5$ and $\mu^2=2$. System sizes up to $L=128$ have been used to obtain the estimates. The line connecting the data points is plotted as a guide to the eye.
}
\end{figure}
%------------------------------------------------
%The remainder of this article is devoted to the computation and analysis of the stationary points $q^\text{s}$ of the potential energy function $V$ of the two-dimensional nearest-neighbor $\phi^4$ model. Originally, the incentive for this study was to investigate whether, and how, the presence of a phase transition reflects in the properties of the stationary points of $V$. The results of this investigation, and a discussion of their implications, have been published in a Letter \cite{KastnerMehta11}. In particular it was shown that, for arbitrary coupling $J$, all stationary points of $V$ correspond to non-positive potential energies, i.e., that $v^\text{s}\equiv V(q^\text{s})/N\leq0$. From this observation it then followed that a ``theorem'', allegedly proven in Ref.\ \cite{FraPe04}, is false, and that a numerical method put forward in Ref.\ \cite{FraPeSpi00} yields incorrect results. The non-positivity of the stationary energies $v^\text{s}$ was established in Ref.\ \cite{KastnerMehta11} analytically, supported by the results obtained with two different numerical methods. The main purpose of the present article is to give a detailed account of these two numerical methods, and to present a much more detailed analysis of the properties of the stationary points of $V$.

%------------------------------------------------
\section{Numerical polynomial homotopy continuation method}
\label{s:homotopy}
The idea behind numerical continuation methods is to first find the solutions of a simple system of equations which shares several important features with the given system. Then, in a second step, starting from these solutions one continues them towards the given system in a systematic way. Homotopy continuation methods have been around already for several decades \cite{Roth:62,AllgowerGeorg}. With more recent machinery like the numerical polynomial homotopy continuation (NPHC) method used in the present article, the method is guaranteed to find all isolated solutions of systems of polynomial equations \cite{SW:95,Li:2003}.

We consider a system of $m$ polynomial equations
\begin{equation}\label{e:poly}
P(q)=\begin{pmatrix}p_{1}(q)\\\vdots\\p_{m}(q)\end{pmatrix}=0
\end{equation}
in the variables $q=(q_{1},\dots,q_{m})^{T}$, and we assume that all solutions of \eqref{e:poly} are isolated. Then B\'ezout's Theorem (see Chapter 8 of \cite{SW:95}) asserts that a system of $m$ polynomial equations in $m$ variables has at most $\prod_{i=1}^m d_i$ isolated solutions where $d_i$ is the degree of the $i$th polynomial. This bound is called the \emph{classical B\'ezout bound}, and it is known to be sharp for generic systems [i.e., for generic values of the coefficients of the polynomials $p_i(q)$].

The continuation of solutions is formally described by the homotopy
\begin{equation}\label{e:homo}
H(q,t) = P(q) (1-t) + \gamma t S(q),
\end{equation}
where $\gamma$ is a complex number and
\begin{equation}\label{e:polystart}
S(q)=\begin{pmatrix}s_{1}(q)\\\vdots\\s_{m}(q)\end{pmatrix}=0
\end{equation}
is again a system of $m$ polynomial equations. Varying the parameter $t\in[0,1]$, $H$ can be deformed from the {\em start system} $H(q,1)=\gamma S(q)$ at $t=1$ into the polynomial system of interest, $H(q,0)= P(q)$ at $t=0$. The following conditions have to be satisfied in order to guarantee that all solutions of $P$ can be computed from this homotopy:
\begin{enumerate}
\item The solutions of $S(q)=0$ can be computed.
\item The number of solutions of $S(q)=0$ satisfies the classical B\'ezout bound for $P(q)=0$ as an equality.
\item The solution set of $H(q,t)=0$ for $t\in(0,1]$ consists of a finite number of smooth paths, called {\em homotopy paths}, which are parameterized by $t$.
\item Every isolated solution of $H(q,0)=P(q)=0$ can be reached by some path originating at a solution of $H(q,1)=\gamma S(q)=0$.
\end{enumerate}
Satisfying the first two criteria hinges on a suitable choice of the start system $S$. Criteria (iii) and (iv) are guaranteed to be satisfied based on the genericity of the constant $\gamma$ in \eqref{e:homo}. Theorem 8.4.1 of \cite{SW:95} states that these criteria hold for all but finitely many $\gamma$ on the unit circle.

The start system $S(q)=0$ can, for example, be taken to be
\begin{equation}\label{eq:Total_Degree_Homotopy}
S(q)=\begin{pmatrix}
q_{1}^{d_{1}}-1\\
\vdots\\
q_{m}^{d_{m}}-1
\end{pmatrix}=0,
\end{equation}
where $d_{i}$ is the degree of the $i^{th}$ polynomial of $P(q)=0$. The system \eqref{eq:Total_Degree_Homotopy} is easy to solve and guarantees that the total number of start solutions is $\prod_{i=1}^{m}d_{i}$ and all solutions are nonsingular.

Each homotopy path, starting at a solution of $S(q)=0$ at $t = 1$, is tracked to $t = 0$ using a path tracking algorithm, e.g., Euler predictor and Newton corrector methods. There are a number of freeware packages well-equipped with path trackers such as PHCpack \cite{Ver:99}, HOM4PS2 \cite{Li:03}, and Bertini \cite{BHSW06}. We used the latter one to get the results in this paper. Tracking the solutions to $t=0$, the set of endpoints of these homotopy paths is the set of all solutions to $P(q)=0$. Since each homotopy path can be tracked independently, NPHC is inherently parallelizable.

The set of real solutions can be obtained from the set of complex solutions by considering the imaginary part of the solutions (typically, up to a numerical tolerance). We remark that the approach of \cite{HS:12} implemented in alphaCertified \cite{alphaCertified} can be used to certify the reality or non-reality of a nonsingular solution given a numerical approximation of the solution. The ability to compute all complex solutions, and thus all real solutions, distinguishes the NPHC method from most other methods. Due to the power of the NPHC method, it has recently found several applications in theoretical physics \cite{Mehta:2009,*Mehta:2009zv,*Mehta:2011xs,*Mehta:2011wj}. % @Dhagash:I have eliminated references to 3 unpublished, unnamed articles. Such a list is of no use for anyone, neither author nor reader...

To find the stationary points of the nearest neighbor $\phi^4$ model, we need to solve its stationary equations, i.e.,
\begin{equation}\label{e:root}
\begin{pmatrix}\frac{\partial V}{\partial q_1}(q^\text{s})\\\vdots\\\frac{\partial V}{\partial q_N}(q^\text{s})\end{pmatrix}=0
\end{equation}
with $q^\text{s}\equiv(q^\text{s}_1,\dotsc,q^\text{s}_N)\in\CC^N$. Since \eqref{e:root} is a system of $N$ coupled third-order polynomial equations, the classical B\'ezout bound is $3^N$. For this particular system, we know that the number of solutions is exactly $3^N$ (counting multiplicity) for any parameters $J$ and $\mu^2$ with $\lambda\neq0$. This follows since the system consisting of all the terms of degree three is a decoupled system of monomials. That is, there is only one term of degree three for the $i$th polynomial in \eqref{e:root} which depends only upon $q^\text{s}_i$, namely the monomial $\lambda(q^\text{s}_i)^3/6$. This implies that \eqref{e:root} has no solutions ``at infinity'' so that the classical B\'ezout bound must be sharp (counting multiplicity). Thus, we have a solid check on our claim to find all solutions using homotopy continuation. However, the problem is that $3^N$ grows rapidly as $N$ increases and, due to current computational limitations, we are restricted to only small size lattices such as $3\times 3$ and $4\times 4$.

For the $3\times 3$ lattice, it took an average of roughly a minute to compute the $3^9$ solutions (counting multiplicity) for a given value of $J$ using Bertini running on a 2.4 GHz Opteron 250 processor with 64-bit Linux. For the $4\times 4$ lattice, it took an average of roughly $8.5$ hours to compute the $3^{16}$ solutions (counting multiplicity) for a given value of $J$ using Bertini running on a cluster consisting of $12$ nodes, each containing two 2.33 GHz quad-core Xeon 5410 processors running 64-bit Linux.

%------------------------------------------------

\section{Properties of stationary points}
\label{s:properties}
Using the NPHC method as explained in the previous section, we can obtain all complex stationary points of $V$. In the context of energy landscape methods, one is usually interested in the real solutions only, i.e., solutions of \eqref{e:root} with $q^\text{s}\in\RR^N$. In the next few subsections, we report on the properties of these real stationary points: In Sec.\ \ref{s:real} the number of real stationary points is analyzed and the existence of singular solutions is discussed. In Sec.\ \ref{s:values} we study the potential energies $V(q^\text{s})$ of the real $q^\text{s}$, and in Sec.\ \ref{s:Hesse} their Hessian determinants. In Sec.\ \ref{s:Euler} the Euler characteristic of certain submanifolds in configuration space, computed from the indices of the real stationary points, is investigated. Since, as mentioned in the Introduction and discussed in a Letter \cite{KastnerMehta11}, we found that the real stationary points are not related to the phase transition of the model (at least not in the direct way predicted by the theorem in Ref.\ \cite{FraPe04,*FraPeSpi07}), we extended our analysis to include complex stationary points. The results of this analysis are reported in Sec.\ \ref{s:complex}.

%------------------------------------------------

\subsection{Real stationary points}
\label{s:real}

For $J=0$, i.e., in the absence of coupling, the stationary points $q^\text{s}$ of the potential $V$ in \eqref{e:V} can be calculated analytically, obtaining $3^N$ distinct solutions $q^\text{s}=(q^\text{s}_1,\dotsc,q^\text{s}_N)$ with $q^\text{s}_j\in\{0,\pm\sqrt{6\mu^2/\lambda}\}$. Since $\lambda,\mu>0$, these stationary points are all real. Upon increasing the coupling constant $J$, real stationary points start to bifurcate into complex ones, and the number of real stationary points decreases gradually from $3^N$ for $J=0$ to only 3 stationary points for some sufficiently large $J$. This behavior is illustrated for $3\times3$ and $4\times4$ lattices in Fig.\ \ref{f:number}. The three stationary points that persist at large $J$ are the two global minima $q^\text{s}=(q^\text{s}_1,\dotsc,q^\text{s}_N)$ where all $q^\text{s}_j=\sqrt{6\mu^2/\lambda}$, respectively $-\sqrt{6\mu^2/\lambda}$, and a stationary point of index $1$ where all $q^\text{s}_j=0$.

%------------------------------------------------
\begin{figure}\center
\includegraphics[width=0.49\linewidth]{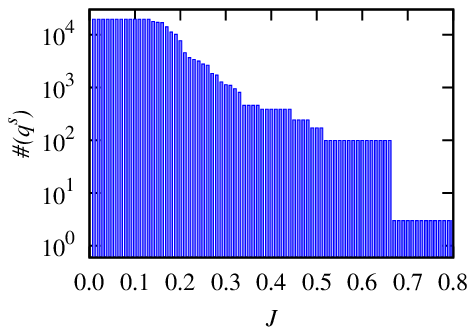}
\includegraphics[width=0.49\linewidth]{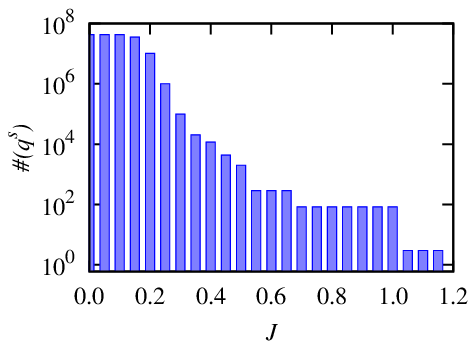}
\caption{\label{f:number}
(Color online) The number of real stationary points of $V$ for $3\times3$ (left) and $4\times 4$ (right) lattices, plotted logarithmically as a function of the coupling $J$.
}
\end{figure}
%------------------------------------------------
The value of $J$ at which the number of real solutions drops to 3 can be computed semi-analytically. This is done by computing with {\sc Mathematica} the index of the stationary point $q^\text{s}=(0,\dotsc,0)$ as a function of $J$ and then search for the value of $J$ at which the index drops to 1. Strictly speaking this value of the index does not guarantee that there are indeed only 3 real solutions, but the evidence we find makes it appear plausible at least:
\begin{enumerate}
\item For the $3\times3$ and $4\times4$ lattices where we can compute all stationary points, the index drops to 1 at the same value where the number of solution drops to 3.
\item Once the index is 1, it remains zero for all larger $J$ we tested. Assuming that, as in the case of the $3\times3$ and $4\times4$ lattices, the number of real stationary points always decreases with increasing $J$ and reaches 3 for some value of $J$, it appears plausible that this last change of the index happens when the number of real solution drops to its lowest value of 3.
\item Index 1 of the stationary point $q^\text{s}=(0,\dotsc,0)$ is consistent with the existence of just 3 real solutions from the point of view of the Euler characteristic \eqref{e:Euler} as introduced in Sec.\ \ref{s:Euler}:  The two global minima (having index zero) contribute $+2$ to the Euler characteristic. In the absence of other stationary points, $q^\text{s}=(0,\dotsc,0)$ has to contribute $-1$ which is achieved by a stationary point of index 1 (but any other odd index would have worked as well).
\end{enumerate}
Accepting this reasoning as plausible, we find the values of $J(N)$ at which the number of solutions drops to 3 to be $N$-dependent and to be fitted excellently by a parabola, as shown in Fig.\ \ref{f:JofL}.
%------------------------------------------------
\begin{figure}[b]\center
\includegraphics[width=0.7\linewidth]{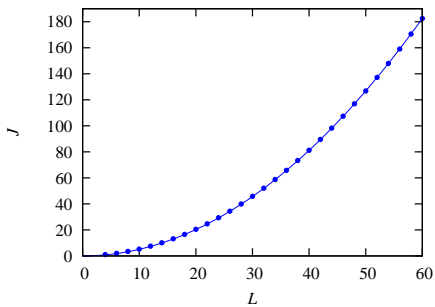}
\caption{\label{f:JofL}
(Color online) The value of $J$ at which, for a given linear system size $L$, the number of real stationary points of $V$ drops to 3. The dots are data points computed with {\sc Mathematica}, the line is the parabola $0.0507366 L^2$ fitted to the data.
}
\end{figure}
%------------------------------------------------

We have also investigated the values of $J$ for which the system has at least one real singular solution, i.e., bifurcation points of the parametric systems, using NPHC. At these solutions the potential has degenerate critical points, a feature that does not make $V$ qualified to directly apply Morse theory as described in Section \ref{s:Euler}. There are two approaches that we used to compute where the bifurcations in a one-parameter system occur, which we describe in the context of computing where the first bifurcation occurs. In the first approach, we use the basic philosophy of the NPHC method with a slight change that we treat $J$ itself as a continuation parameter, i.e., we start with the known solutions at $J = 0$ and simply track the solutions as $J$ increases to determine the smallest value of $J>0$ where solutions coalesce. This yielded the values of $J \approx 0.12907$ and $J \approx 0.12894$ for the $3\times 3$ and $4\times 4$ lattice, respectively.

In the second approach, we use the fact that the Hessian determinant, $\det{\mathcal H}_V(q,J)$, where,
\begin{equation}\label{e:Hesse}
{\mathcal H}_V(q) = \left(\frac{\partial^2 V(q)}{\partial q_i \partial q_j}\right)_{\!i,j},
\end{equation}
is zero at the singular solutions. We add this equation, $\det{\mathcal H}_V(q,J) = 0$, as an additional equation in the system of stationary equations leaving $J$ unfixed so that it can be treated as a variable. We then use Bertini to compute the set ${\mathcal S}$ of values of $J$ where this combined system has a solution. Since all of the solutions at $J = 0$ are nonsingular, it follows that the set ${\mathcal S}$ is the set of roots of a nonzero univariate polynomial $s(x)$. In particular, ${\mathcal S}$ is a finite set of points. See Appendix \ref{s:appendix} for more details.

The coefficients of the polynomial $s$ depend upon $\lambda$ and $\mu^2$. If $\lambda$ and $\mu^2$ are rational numbers, then $s$ has rational coefficients meaning that $\mathcal S$ is a finite subset of the set of algebraic numbers, a countable subset of $\CC$. For example, with $\lambda = 3/5$ and $\mu^2 = 2$, we know that the set $V$ of complex stationary points must contain $3^N$ distinct points when $J$ is a transcendental number, e.g., $J = \pi$.

For the $3 \times 3$ lattice, Bertini found that $\mathcal S$ consists of $1357$ complex numbers, of which $297$ are real and $178$ are positive. The smallest positive value using this approach is also $J \approx 0.12907$. This computation also yields that, for $J > 11.00169$, all stationary points must be nonsingular. Performing this same computation using the $4\times 4$ lattice is currently beyond the available computational resources.
%------------------------------------------------
\begin{figure*}\center
\includegraphics[width=0.245\linewidth]{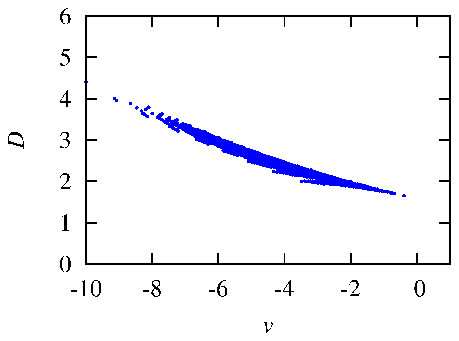}
\includegraphics[width=0.245\linewidth]{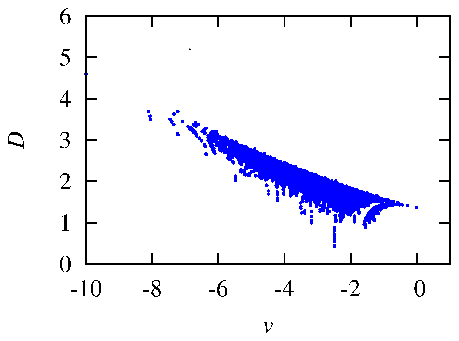}
\includegraphics[width=0.245\linewidth]{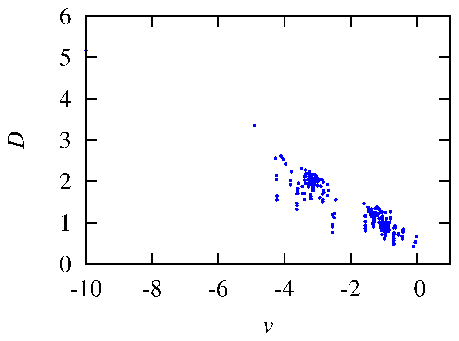}
\includegraphics[width=0.245\linewidth]{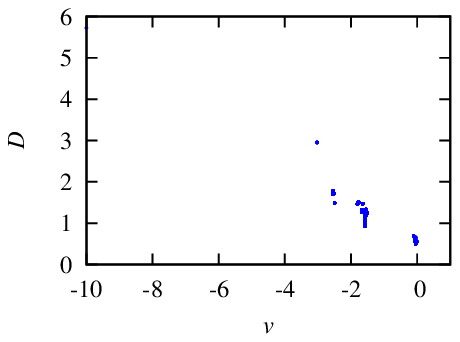}
\caption{\label{f:Hesse}
(Color online) The scaled Hessian determinant $D$ plotted vs.\ the stationary values $v^\text{s}$ for all real stationary points $q^\text{s}$ of a $4\times4$ lattice with couplings $J=0.1$, 0.15, 0.3, and 0.45 (from left to right). To compare with the corresponding values of the phase transition energy $v_\text{c}$, see Fig.\ \ref{f:vc_vs_J}. The distribution of potential energies also illustrates that $v^\text{s}\leq0$ for all $q^\text{s}$, as discussed in Sec.\ \ref{s:values}.
}
\end{figure*}
%------------------------------------------------

%------------------------------------------------

\subsection{Stationary values}
\label{s:values}

In the Introduction, we briefly reviewed the research efforts aiming at establishing a relation between phase transitions and stationary points of the potential energy function $V$. These efforts all have in common that they focus on a conjectured relation between the occurrence of a phase transition at some critical potential energy $v_\text{c}$ and the properties of stationary points $q^\text{s}$ with stationary values $v^\text{s}=V(q^\text{s})/N$ coinciding with $v_\text{c}$ \footnote{In contrast to other approaches which focus on what is called the {\em underlying stationary points}; see \cite{AnRuZa05,*AnRu08}.}. From the stationary points obtained by means of the numerical homotopy continuation method for lattice sizes $3\times3$ and $4\times4$, it is straightforward to compute, via \eqref{e:V}, the stationary values $v^\text{s}$. For arbitrary couplings $J$, we found that $v^\text{s}\leq0$ for all stationary points $q^\text{s}$. An analytical calculation, reported in Ref.\ \cite{KastnerMehta11}, has confirmed this observation and extended it to lattices of arbitrary sizes. As explained in this same reference, it is this upper bound on $v^\text{s}$ which disproves the theorem by Franzosi and Pettini \cite{FraPe04,*FraPeSpi07}, as it cannot be reconciled with the fact that the critical energy $v_\text{c}$ of the phase transition becomes positive for couplings $J\gtrsim0.7$.

%------------------------------------------------

\subsection{Hessian determinant}
\label{s:Hesse}

Once a relation between stationary points of the potential energy landscape and the occurrence of phase transitions had been conjectured in the 1990s, it immediately became clear that not all stationary points induce phase transitions. Therefore an obvious question to ask was: Is there a certain property of a stationary point that renders it capable of inducing a phase transition? Some years later it was noticed that the Hessian determinant ${\mathcal H}_V$ of the potential energy function $V$, evaluated at the stationary points, is crucial for discriminating whether or not a stationary point can induce a phase transition in the thermodynamic limit \cite{KaSchneSchrei07,*KaSchne08,*KaSchneSchrei08}. For some models, even in the absence of an exact solution, this insight facilitated the exact analytic computation of transition energies \cite{NardiniCasetti09,*Kastner11}. We refrain here from stating the precise criterion, noting only that stationary points with a Hessian determinant approaching zero in the thermodynamic limit play an important role.

We evaluated the determinant of the Hesse matrix \eqref{e:Hesse} at all of the real stationary points $q^\text{s}$ of $V$ obtained by the homotopy continuation method. In Fig.\ \ref{f:Hesse}, we show the rescaled Hessian determinant
\begin{equation}\label{e:D}
D=|\det{\mathcal H}_V(q^\text{s})|^{1/N},
\end{equation}
plotted versus the stationary values $v^\text{s}=V(q^\text{s})/N$ for all real stationary points of $4\times4$ lattices and various couplings $J$. From these plots one can immediately verify that $v^\text{s}\leq0$ for all real stationary points and arbitrary coupling $J$, as discussed in Sec.\ \ref{s:values}. Since in general (i.e., at least for sufficiently large $J$) the potential energy at which the phase transition occurs is not close to any of the stationary points, there is no point in discussing the Hessian determinant as a possible signature of the transition in the spirit of what was proposed in the abovementioned references \cite{KaSchneSchrei07,*KaSchne08,*KaSchneSchrei08}. In Sec.\ \ref{s:NR} we will use the data as presented in Fig.\ \ref{f:Hesse} for a different purpose, namely to compare the homotopy continuation data to those obtained by means of the Newton-Raphson method.
%------------------------------------------------
\begin{figure*}\center
\includegraphics[width=0.245\linewidth]{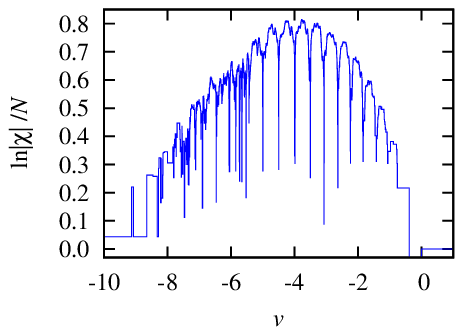}
\includegraphics[width=0.245\linewidth]{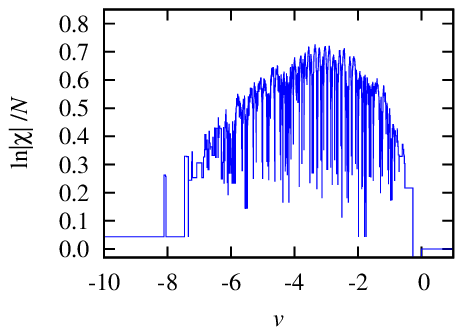}
\includegraphics[width=0.245\linewidth]{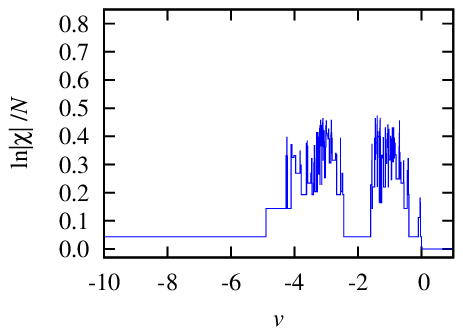}
\includegraphics[width=0.245\linewidth]{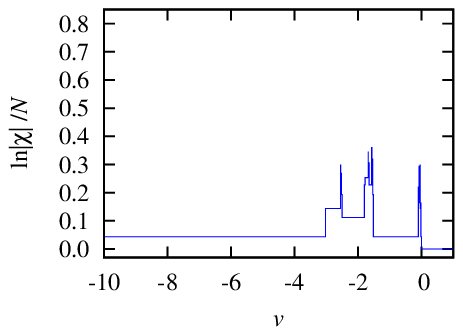}
\caption{\label{f:euler_char1}
(Color online) Graphs of the logarithm of the Euler characteristic, $\ln|\chi(M_v)|/N$, for a $4\times4$ lattice and coupling strengths $J=0.1$, 0.15, 0.3, and 0.45 (from left to right). Note that these results are exact and the oscillations visible are not a consequence of noisy data.}
\end{figure*}
%------------------------------------------------

%------------------------------------------------

\subsection{Euler characteristic}
\label{s:Euler}

In the Introduction, and also at the beginning of Sec.\ \ref{s:properties}, we referred to the work of Franzosi and Pettini \cite{FraPe04,*FraPeSpi07} or to related publications as dealing with the relation of stationary points of the potential energy function $V$ to thermodynamic phase transitions. Although this is correct as regards content, it is not obvious at first glance, as these results were originally phrased in terms of topology changes of certain submanifolds $M_v$ in configuration space $\Gamma$,
\begin{equation}
M_v=\left\{q\in\Gamma\,\big|\,V(q)\leq Nv\right\}.
\end{equation}
Upon variation of the parameter $v$, the topology of the submanifolds $M_v$ may change at some value $v_\text{t}$, in the sense that $M_v$ is not homeomorphic to $M_w$ for $v<v_\text{t}$ and $w>v_\text{t}$. The occurrence of phase transitions at some critical potential energy $v_\text{c}$ was then conjectured to be related to the presence of topology changes with energies $v_\text{t}$ in an open neighborhood of $v_\text{c}$. Via Morse theory, such topology changes can be related to the presence of stationary points of $V$ with stationary values $v^\text{s}=v_\text{t}$ (see \cite{Kastner08} for an elementary introduction or \cite{Matsumoto,*Milnor} for a textbook presentation).

In the context of configuration space topology, the Euler characteristic $\chi(M_v)$ of the manifolds $M_v$ has been used in several publications as a way of characterizing the changes of topology \cite{FraPeSpi00,CaCoPe02,*CaPeCo03,*Angelani_etal03}. The Euler characteristic $\chi$ is a topological invariant, i.e., different values of $\chi$ for manifolds $M_v$ and $M_w$ imply that $M_v$ and $M_w$ are not homeomorphic. Hence monitoring the Euler characteristic of the family $\left\{M_v\right\}_{v\in\RR}$ of configuration space subsets under variation of the parameter $v$, we may get an impression of the way the topology of the $M_v$ changes. Plotting the related quantity
\begin{equation}
\sigma(v)=\lim_{N\to\infty}\frac{1}{N}\ln|\chi(M_v)|
\end{equation}
as a function of the potential energy $v$, a kink in $\sigma$ was observed precisely at the critical energy $v_\text{c}$ of the phase transition for several models studied \cite{FraPeSpi00,CaCoPe02,*CaPeCo03,*Angelani_etal03,*Kastner11,*MehtaKastner}.

Knowing all stationary points of $V$ with stationary values $v^\text{s}$ up to a given value $v$, the Euler characteristic of $M_v$ can be calculated by means of the formula
\begin{equation}\label{e:Euler}
\chi(M_v)=\sum_{i=0}^N (-1)^i \mu_i(v),
\end{equation}
where the Morse numbers $\mu_i(v)$ are defined in this context as the number of stationary points $q^\text{s}$ of $V$ with index $i$ and stationary value $v^\text{s}\leq v$. The index $i$ is defined as the number of negative eigenvalues of the Hessian matrix ${\mathcal H}_V(q^\text{s})$, which is assumed to have only nonzero eigenvalues. As we noted earlier, for finitely many values of $J$ the corresponding systems of equations indeed possess singular solutions (see the appendix). Using the NPHC method, we know which of the values of $J$ possess at least one singular solution and in this section, we avoid such values of $J$.

We have computed the Euler characteristic $\chi(M_v)$ from the real stationary points $q^\text{s}$ of $V$ as obtained by the homotopy continuation method, and the results are plotted as a function of $v$ and for various values of $J$ in Fig.\ \ref{f:euler_char1}. %The values of $J$ are chosen such that the corresponding systems does not possess any singular real solution and Morse theory directly applies. 
Since the energy levels are very closely spaced, it is difficult to distinguish one from another. Here, we use the tolerance $10^{-8}$, i.e., if $|v_1 - v_2|\geq 10^{-8}$, then $v_1$ and $v_2$ are distinct energy levels. No kink or other signature is visible in $\chi(M_v)$ at $v=v_\text{c}$: As was discussed in Sec.\ \ref{s:values}, the stationary values $v^\text{s}$ are nonpositive, and the Euler characteristic $\chi(M_v)$ is therefore constant for $v>0$. %In fact, the $v$ vs $\frac{1}{N}\ln |\chi(M_v)|$ plots do not even exhibit smooth curve like behavior as claimed in Ref.\ \cite{FraPeSpi00}. 
Since the critical potential energy $v_\text{c}$ is positive for $J\gtrsim0.7$, it is clear that in this case the phase transition cannot be signaled by a signature in $\chi(M_v)$. For $J\lesssim0.7$ our results cannot exclude that a signature of the phase transition is somewhere hidden in the data of $\chi(M_v)$, but it seems unlikely that above and below this seemingly arbitrary value of $J$ the behavior should be so different.

Note that a similar conclusion holds for the Euler characteristic $\chi(\Sigma_v)$ of the related manifolds
\begin{equation}
\Sigma_v=\left\{q\in\Gamma\,\big|\,V(q)= Nv\right\}.
\end{equation}
These submanifolds of $\Gamma$ are the boundaries of $M_v$, and their Euler characteristic has been studied numerically in Ref.\ \cite{FraPeSpi00}. For potential energies $v>0$, we know that the manifold $M_v$ is homeomorphic to an $N$-dimensional ball. Its boundary is therefore homeomorphic to an $(N-1)$-sphere, and its Euler characteristic is constant for $v>0$. In fact we have $\chi(\Sigma_v)=0$ or 2, depending on whether $N$ is odd or even. Comparing this exact result with the plot of numerical data in Fig.\ 3 of \cite{FraPeSpi00}, we have to conclude that the behavior of $\chi(\Sigma_v)$ reported in this reference must be an artefact of the numerical method employed.

We can use the results of our computation of the Euler characteristic $\chi(M_v)$ as a consistency check: For potential energies $v>0$ where the manifold $M_v$ is homeomorphic to an $N$-dimensional ball, the Euler characteristic is known to be $\chi(M_v)=1$ for all $v>0$. Computing the alternating sum \eqref{e:Euler} with all the stationary points and their indices as an input, we find that at $v=0$, $\chi(M_v)=1$. Since there is no stationary point for $v>0$, $\chi(M_v)=1$ for all $v>0$, as it should be. We have confirmed this result for all the values of $J$ without singular solutions used in this paper.

%------------------------------------------------

\subsection{Complex stationary points}
\label{s:complex}

In Sec.\ \ref{s:values}, we discussed the fact that, for arbitrary coupling $J$, the stationary values $v^\text{s}$ are never positive, while the critical energy $v_\text{c}$ of the phase transition of the nearest-neighbor $\phi^4$ model becomes positive for $J\gtrsim0.7$. A direct relation between phase transitions and stationary points of $V$ (in the spirit of the one in Ref.\ \cite{FraPe04,*FraPeSpi07}) is hence ruled out, but one might wonder if a modification of the conjectured relation might be more successful.

One possible and rather straightforward generalization of this conjecture is obtained by considering not only real stationary points, but also complex ones. The reasoning behind this generalization is that the presence of complex stationary points whose imaginary parts go to zero with increasing system size $N$ should have the same (or at least a similar) effect on the thermodynamic properties of the system as their real counterparts. To test this idea, we have used the (in general complex) stationary points $q^\text{s}$ obtained by means of the homotopy continuation method and plotted in Fig.\ \ref{f:complex} real and imaginary parts of the (complex) potential $V(q^\text{s})$ for various values of the coupling $J$.

%------------------------------------------------
\begin{figure*}\center
\includegraphics[width=0.245\linewidth]{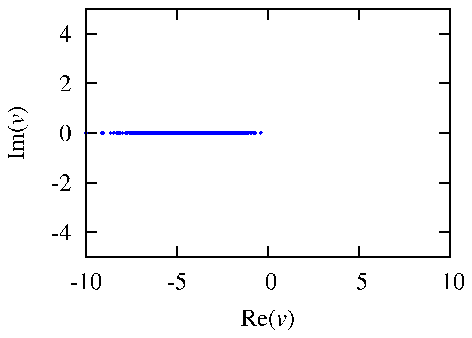}
\includegraphics[width=0.245\linewidth]{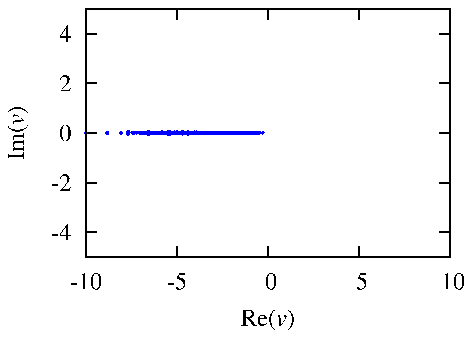}
\includegraphics[width=0.245\linewidth]{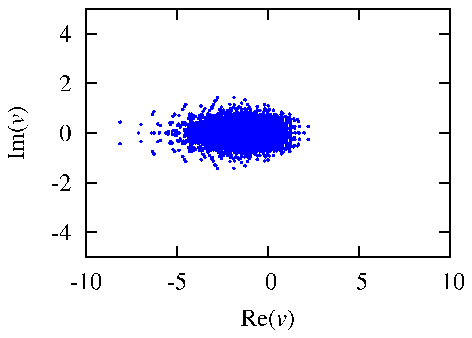}
\includegraphics[width=0.245\linewidth]{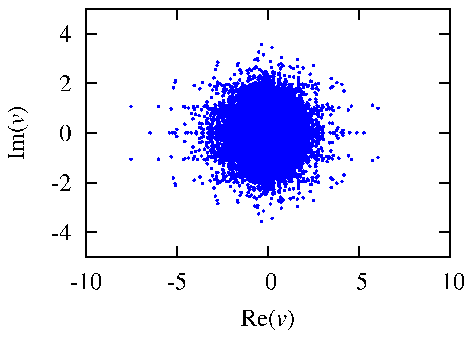}
\caption{\label{f:complex}
(Color online) Imaginary vs.\ real parts of the complex potential energies $V(q^\text{s})/N$ for all complex solutions $q^\text{s}$ of a $4\times 4$ lattice with coupling $J=0.1$, 0.15, 0.3, and 0.45 (from left to right). For small couplings $J\lesssim0.2$ the real part is nonpositive for all $q^\text{s}$, whereas for larger couplings some of the stationary values move into the right halfplane.}
\end{figure*}
%-----------------------------------------------
At first sight the results are encouraging, as they show that, for sufficiently large $J$, there exist complex $q^\text{s}$ with positive real stationary values $V(q^\text{s})$. Moreover, for the couplings $J$ we studied, the maximal real stationary value is larger than the critical potential energy of the phase transition. Unfortunately, from the data we have there is not much more we can say, and it would be unreasonable to conjecture a relation of the above mentioned kind on the basis of our results.

%------------------------------------------------

\section{Newton-Raphson method}
\label{s:NR}

The Newton-Raphson method is a powerful and frequently used iterative algorithm for approximating the roots of a function (see Sec.\ 9.7 of \cite{NumRecC}). In the context of energy landscapes, the stationary points of $V$ are determined by the system of $N$ equations \eqref{e:root}, so the problem is equivalent to finding the roots of the vector-valued function on the left-hand side of \eqref{e:root}.

From a given initial point in phase space, the Newton-Raphson method iteratively finds approximations to a stationary point. If the function has more than one stationary point, it will depend on the initial value of the iteration which of the stationary points is found. For the potential energy function \eqref{e:V} of the two-dimensional nearest-neighbor $\phi^4$ model, we have seen in Sec.\ \ref{s:homotopy} that, at least for small coupling $J$, the number of stationary points is exponentially large in the number $N$ of lattice sites. The result of the Newton-Raphson computation will therefore crucially depend on the set of initial points chosen for the iterations. First, the initial points have to differ sufficiently from each other, in order to make sure that different stationary points are found in the various iteration runs. Second, properties of the initial points will have an influence on the properties of the stationary points found, as the outcome of a Newton-Raphson run typically yields a stationary point that is in some sense close to the initial point.

For a given coupling $J$ and lattice sizes up to $N=32\times32$, we generated sets of $10^6$ initial points by means of a standard Metropolis Monte Carlo dynamics in configuration space \cite{MeRoRoTeTe53,*KastnerMC}. The temperature $T$ in the canonical acceptance rate of the Monte Carlo algorithm was set to $T=100$, and we will comment on this choice of $T$ towards the end of this section. Starting from each of the thus generated initial points, the routine {\tt newt} from \cite{NumRecC}, a globally convergent version of the Newton-Raphson method, was used to compute stationary points of $V$. Like in the homotopy continuation computations, all stationary points $q^\text{s}$ were found to have nonpositive potential energies $v^\text{s}\leq0$, and the number of stationary points was found to decrease dramatically with increasing coupling $J$.

For smaller couplings ($J=0.1$ and $J=0.2$) where the number of stationary points is large, we have plotted the results of the Newton-Raphson calculations in Fig.\ \ref{f:NR}.%
%------------------------------------------------
%\begin{figure}\center
%\includegraphics[width=0.49\linewidth]{figure6a.eps}
%\includegraphics[width=0.49\linewidth]{figure6b.eps}
%\caption{\label{f:NR}
%(Color online) Numerical results from the Newton-Raphson method. For system sizes $N=L\times L$ with $L=3$, 4, 6, 8, and 16, the scaled Hessian determinant $D$ is shown versus the stationary value $v^\text{s}$. Up to $10^6$ different stationary points $q^\text{s}$ per system size have been computed for $J=0.1$ (left) and $J=0.2$ (right).
%}
%\end{figure}
%------------------------------------------------
%------------------------------------------------
\begin{figure*}\center
\includegraphics[width=0.245\linewidth]{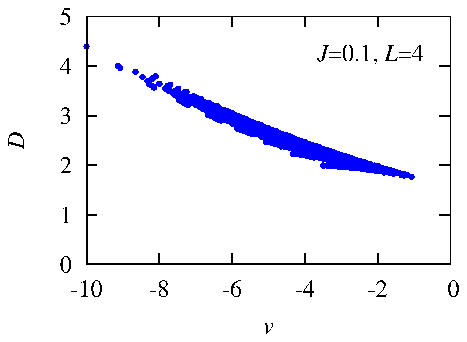}
\includegraphics[width=0.245\linewidth]{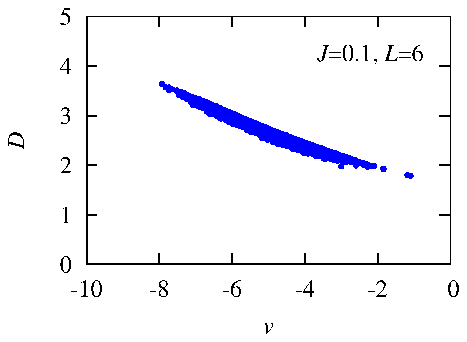}
\includegraphics[width=0.245\linewidth]{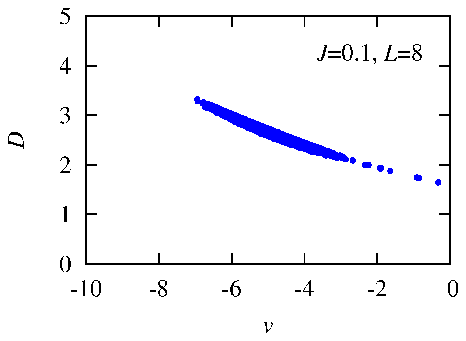}
\includegraphics[width=0.245\linewidth]{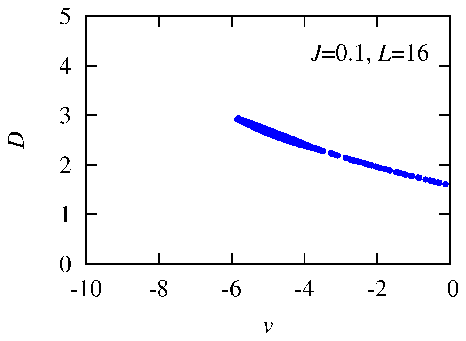}
\includegraphics[width=0.245\linewidth]{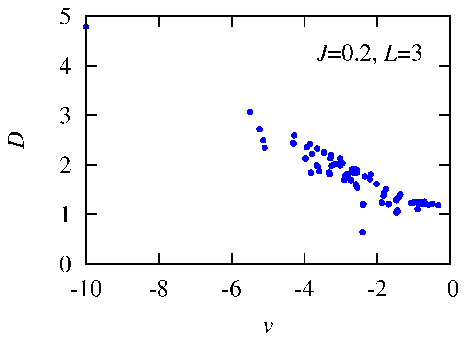}
\includegraphics[width=0.245\linewidth]{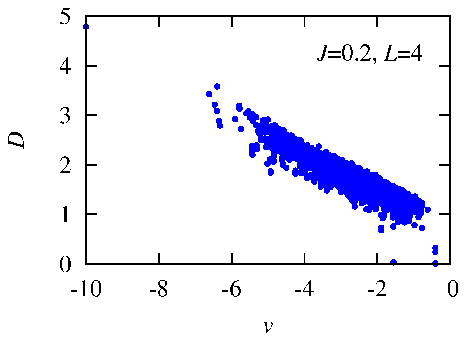}
\includegraphics[width=0.245\linewidth]{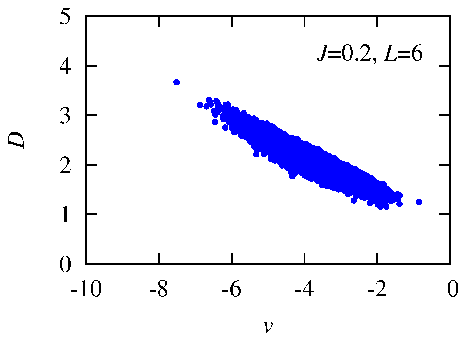}
\includegraphics[width=0.245\linewidth]{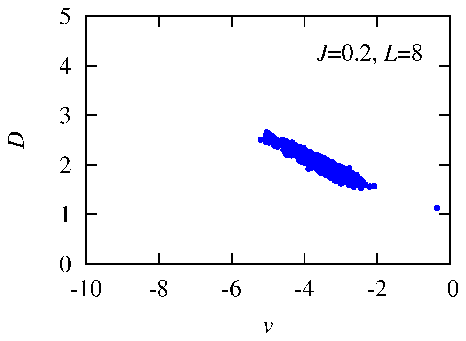}
\caption{\label{f:NR}
(Color online) Numerical results from the Newton-Raphson method. For system sizes $N=L\times L$ with $L=3$, 4, 6, 8, and 16, the scaled Hessian determinant $D$ is shown versus the stationary value $v^\text{s}$. Up to $10^6$ different stationary points $q^\text{s}$ per system size have been computed for $J=0.1$ (top row) and $J=0.2$ (bottom row).
}
\end{figure*}
%------------------------------------------------
Like for the results from numerical continuation in Sec.\ \ref{s:homotopy}, we have plotted the scaled Hessian determinant $D$ at a stationary point versus its stationary value $v^\text{s}$. For the smaller system sizes $N=L\times L$ with $L=3$ and $L=4$, the shapes of the clouds of points shown in Fig.\ \ref{f:NR} resemble the ones produced from the complete set of stationary points in Fig.\ \ref{f:Hesse}. For larger system sizes $L=6$, 8, 16, the cloud of points becomes more and more focused, being localized in that region of the $(v,D)$ plane where the concentration of stationary points is largest.

We have seen that, in contrast to the homotopy continuation method where only small system sizes $L=3$ and $L=4$ were accessible, the Newton-Raphson method can be applied to much larger sizes up to $L=32$ (and even larger with more numerical effort). However, for small couplings $J$ and the larger $L$ considered, the number of real stationary points of $V$ is expected to be of the order of $3^N$, and it is evident that we can not compute more than a small fraction of them.

This is reminiscent of the situation one encounters in Monte Carlo simulations where only a tiny subset of a tremendously large configuration space can be sampled. In the Monte Carlo context, the problem can be overcome (or at least significantly abated) by the technique of importance sampling \cite{KastnerMC}. We have tried a very straightforward (and possibly naive) adaptation of this idea to the Newton-Raphson computation of stationary points, simply by adjusting the parameter $T$ of the Metropolis importance sampling algorithm which was used for generating the initial points of the Newton-Raphson search. Somewhat disappointingly, the shape of the cloud of points in Fig.\ \ref{f:NR} turned out to be entirely insensitive to changes in $T$. Using for example a small value of $T$, we would have expected to end up with stationary points of lower potential energy on average, but surprisingly this was not the case.

There are other, more involved ways of how one could shift the search of stationary points to higher or lower potential energies, but we have not yet implemented such refinements. One could, for example, use a more advanced search routine (like the OPTIM program package \cite{WalesOPTIM}) which allows one to search for stationary points of a given index, i.e., of a given number of negative eigenvalues of the Hessian at the stationary point. Since the index of a stationary point and its potential energy are expected to be correlated, such a routine should find stationary points of low energy when searching for small indices, and {\em vice versa}.

%------------------------------------------------
\section{Conclusions}
\label{s:conclusions}

Two numerical methods for the computation of stationary points of multivariate functions were discussed in this article: the numerical polynomial homotopy continuation method (NPHC) and a globally-convergent variant of the Newton-Raphson method. We applied both methods to the potential energy function $V$ of the two-dimensional nearest-neighbor $\phi^4$ model on $L\times L$ square lattices. The NPHC method allows one to obtain all stationary points of $V$, but is limited to system sizes up to $4\times4$ with the computational resources we had at our disposal. With the Newton-Raphson method we have computed stationary points for larger lattices of up to $32\times32$ sites, but only a small subset of all the stationary points of such a large system could be obtained.

The motivation for this type of study originates from a number of conjectures relating the stationary points of $V$ to the occurrence of phase transitions in the thermodynamic limit. These conjectures refer to certain quantities which can be computed from the stationary points of $V$, like their potential energies, their Hessian determinants, and the Euler characteristic of the underlying potential energy manifolds in configuration space. We have calculated these and a few other quantities from the stationary points of the $\phi^4$ model obtained with NPHC and Newton-Raphson, but---contrary to what the conjectures suggest---no sign of the phase transition of the model was found. This failure and its consequences, including the falsification of a theorem allegedly proven in Ref.\ \cite{FraPe04,*FraPeSpi07}, was discussed in a Letter \cite{KastnerMehta11}.

The NPHC results for the nearest-neighbor $\phi^4$ model on a $4\times4$ lattice can be overviewed as follows:
\begin{enumerate}
\item The number of real stationary points decreases from $3^N$ for $J=0$ to only 3 with increasing $J$
\item For any finite $N$, singular solutions occur only for finitely many values of $J$.
\item The stationary values $v^\text{s}$ are all nonpositive for arbitrary couplings $J$.
\item The Euler characteristic, computed as the alternating sum of the Morse numbers, confirms the correct and complete computation of all the stationary points.
\item Unlike real stationary points, complex stationary points of $V$ can have positive stationary values, but we were unable to identify a relation between these positive values and the positive phase transition energy of the $\phi^4$ model for larger $J$.
\end{enumerate}

Since the Newton-Raphson method yields only a subset of all the stationary points, we compared these results for system sizes up to $16\times16$ to those obtained by the NPHC method for $4\times4$ lattices. For this comparison we chose plots of the rescaled Hessian determinant $D$ as defined in \eqref{e:D} vs.\ the potential energy $v$. A comparison of different lattice sizes is of course problematic, but a general trend can be deduced: For system sizes $8\times8$ and larger, the number of stationary points becomes in general so large that only that region in the $(D,v)$-plane is explored where the (strongly peaked) density of stationary points is the highest. Importance sampling may provide a way out of these difficulties, but we have not yet implemented such a scheme. 

%The main purpose of the present article is to provide a detailed account of the numerical results leading to these conclusions. Although the stationary points of $V$ did not reveal any signature of the phase transition of the nearest-neighbor $\phi^4$ model, our study serves to illustrate the strengths and weaknesses of the numerical methods employed. 

\begin{acknowledgments}
D.M.\ acknowledges support by the U.S.\ Department of Energy under contract DE-FG02-85ER40237 and by the Science Foundation Ireland grant 08/RFP/PHY1462.
J.D.H.\ acknowledges support by the U.S.\ National Science Foundation under grants DMS-0915211 and DMS-1114336.
M.K.\ acknowledges support by the {\em Incentive Funding for Rated Researchers programme}\/ of the National Research Foundation of South Africa.
\end{acknowledgments}

\appendix

\section{Morse property of the potential energy}
\label{s:appendix}

In Sec.\ \ref{s:real}, for given $N$, $\lambda$ and $\mu^2$, we considered values of $J$ for which the potential $V$ in \eqref{e:V} is not a proper Morse function, i.e., for which not all of the stationary points of $V$ have a nonzero Hessian determinant. The following provides more details regarding such values of~$J$ and relationship to a theorem by Franzosi and Pettini.

Let ${\mathcal A(N,\lambda,\mu^2)}$ be the set of pairs $(q,J)$ such that $q$ is a singular stationary point (either real or complex) of $V$, i.e., \eqref{e:root} holds and $\det{\mathcal H}_V(q,J) = 0$. Denote by
\begin{equation}
{\mathcal S(N,\lambda,\mu^2)} = \{J\in\CC~|~(q,J)\in{\mathcal A(N,\lambda,\mu^2)} \text{~for some~} q\}
\end{equation}
the set of values $J$ such that the system describing the set of stationary points of $V$ has at least one singular solution.
\begin{proposition}\label{prop:finite}
For each $N\geq 2$ and nonzero $\lambda,\mu\in\CC$, the set ${\mathcal S(N,\lambda,\mu^2)}$ is a finite subset of $\CC$.
\end{proposition}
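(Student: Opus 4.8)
The plan is to eliminate the configuration variables $q$ from the combined system and to exhibit a single nonzero univariate polynomial $s(J)$ whose zero set is exactly $\mathcal S(N,\lambda,\mu^2)$. The crucial input, already established in Sec.\ \ref{s:homotopy}, is that for \emph{every} value of $J$ the stationary system \eqref{e:root} is zero-dimensional with exactly $3^N$ solutions counted with multiplicity and no solutions at infinity, because its degree-three part is the decoupled monomial system $q_k\mapsto \lambda q_k^3/6$, which is independent of $J$ and has only the trivial projective common zero. Consequently the projection $(q,J)\mapsto J$ restricted to the stationary variety $\{(q,J):\partial V/\partial q_k(q,J)=0,\ k=1,\dots,N\}\subset\CC^N\times\CC$ is a finite morphism whose fiber has constant length $3^N$. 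Over the principal ideal domain $\CC[J]$, constancy of the fiber dimension forces the coordinate ring $B=\CC[J][q_1,\dots,q_N]/\langle\partial V/\partial q_1,\dots,\partial V/\partial q_N\rangle$ to be a free $\CC[J]$-module of rank $3^N$ (any torsion summand would drop the fiber dimension at its support).

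Given this, the second step is to define the eliminant. Write $h=\det\mathcal H_V\in B$ for the class of the Hessian determinant. Multiplication by $h$ is a $\CC[J]$-linear endomorphism $m_h\colon B\to B$, so after choosing a $\CC[J]$-basis of $B$ it is represented by a $3^N\times3^N$ matrix with entries in $\CC[J]$, and I set
\begin{equation}\label{e:eliminant}
s(J):=\det m_h\in\CC[J].
\end{equation}
For any fixed $J_0\in\CC$, reducing modulo $(J-J_0)$ specializes $B$ to the Artinian algebra $B_{J_0}=\CC[q]/\langle\partial V/\partial q_k(\cdot,J_0)\rangle$ of the stationary points at $J_0$, and $s(J_0)$ equals the determinant of multiplication by $h$ on $B_{J_0}$, i.e.\ the product $\prod_p h(p)^{m_p}$ of the Hessian-determinant values over all stationary points $p$, weighted by their multiplicities $m_p$. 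Hence $s(J_0)=0$ precisely when $h(p)=0$ for some stationary point $p$, that is, precisely when $J_0\in\mathcal S(N,\lambda,\mu^2)$. This yields the exact identification $\mathcal S(N,\lambda,\mu^2)=\{J_0\in\CC:s(J_0)=0\}$.

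The final step is to certify that $s$ is not the zero polynomial, which then forces $\mathcal S$ to be finite. For this I evaluate at $J_0=0$: the coupling vanishes, the stationary points are the $3^N$ decoupled configurations with $q_k\in\{0,\pm\sqrt{6\mu^2/\lambda}\}$, and the Hessian is diagonal with entries $\frac{\lambda}{2}q_k^2-\mu^2\in\{-\mu^2,\,2\mu^2\}$. Since $\lambda,\mu\neq0$, every diagonal entry is nonzero, so $h(p)=\det\mathcal H_V(p,0)\neq0$ at each stationary point and therefore $s(0)\neq0$. Thus $s\not\equiv0$, its zero set is finite, and $\mathcal S(N,\lambda,\mu^2)$ is a finite subset of $\CC$.

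I expect the main obstacle to be the justification that $s(J)$ is a genuine polynomial in $J$ rather than merely a rational function, equivalently the freeness of $B$ as a $\CC[J]$-module. This is exactly where the absence of solutions at infinity is indispensable: it guarantees that the fiber length stays constant at $3^N$ as $J$ varies, which is what upgrades the finite projection to a flat one and lets the multiplication matrix be written over $\CC[J]$. An equivalent but slightly less self-contained route is to invoke the elimination/closure theorem to conclude that the Zariski closure of the image of the combined variety $\mathcal A(N,\lambda,\mu^2)$ under the $J$-projection is cut out by the elimination ideal $\langle\partial V/\partial q_k,\det\mathcal H_V\rangle\cap\CC[J]$, to show this ideal is nonzero by the same $J=0$ computation, and to use that a constructible subset of the affine line is either finite or dense so that nondensity already gives finiteness.
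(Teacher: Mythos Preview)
Your proof is correct and reaches the same conclusion via the same two structural pillars as the paper---a univariate eliminant whose zero set is $\mathcal S$, and the explicit $J=0$ check that rules out the identically-zero case---but you take a more constructive route to the eliminant. The paper's argument is abstract: $\mathcal A$ is algebraic, its projection $\mathcal S$ is constructible in the affine line, hence either finite or dense; the Inverse Function Theorem at $J=0$ excludes density. You instead exploit the specific feature noted in Sec.~\ref{s:homotopy}---that the top-degree part of \eqref{e:root} is the $J$-independent diagonal system $\lambda q_k^3/6$---to show the coordinate ring $B$ is free of rank $3^N$ over $\CC[J]$, and then manufacture the eliminant explicitly as $s(J)=\det m_h$ for $h=\det\mathcal H_V$. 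This buys you more: an actual polynomial with a concrete description and a degree bound, and the exact identification $\mathcal S=\{s=0\}$ rather than just its closure. The paper's version is shorter and avoids the commutative-algebra machinery (freeness, Stickelberger-type trace/norm formulas), trading explicitness for brevity by citing the structure theory of constructible sets. Your closing paragraph in fact sketches the paper's route almost verbatim, so you have both arguments in hand.
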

\begin{proof}
The set ${\mathcal A(N,\lambda,\mu^2)}$ is an {\em algebraic set} and the set ${\mathcal S(N,\lambda,\mu^2)}$ is a {\em constructible algebraic set} (see Chapter 12 of \cite{SW:95}). Lemma~12.5.3 of \cite{SW:95} yields that there is a univariate polynomial $s_{N,\lambda,\mu^2}(x)$ such that the set of roots of $s_{N,\lambda,\mu^2}(x)$ is the closure of ${\mathcal S(N,\lambda,\mu^2)}$ in the complex topology. Since a univariate polynomial is either zero or has finitely many roots, this implies that ${\mathcal S(N,\lambda,\mu^2)}$ is either dense in $\CC$ or is a finite set.  Since all stationary points for $J = 0$ are nonsingular, the Inverse Function Theorem (see Chapter 3 of \cite{DK04}) implies that this must hold in an open neighborhood of $0$. In particular, ${\mathcal S(N,\lambda,\mu^2)}$ can not be dense in $\CC$ and thus must be finite.
\end{proof}

It follows from elimination theory (see Chapter 14 of \cite{Eisenbud}) that the coefficients of $s_{N,\lambda,\mu^2}(x)$ are polynomials in $\lambda$ and $\mu^2$ with rational coefficients. In particular, if $\lambda$ and $\mu^2$ are rational, then $s_{N,\lambda,\mu^2}(x)$ has rational coefficients so that ${\mathcal S(N,\lambda,\mu^2)}$ consists of finitely many algebraic numbers.

\begin{corollary}\label{cor:final}
For nonzero $\lambda,\mu\in\CC$, the set
\begin{equation}
{\mathcal T}(\lambda,\mu^2) = \bigcup_{N\geq2} {\mathcal S(N,\lambda,\mu^2)}.
\end{equation}
is a countable subset of $\CC$.
\end{corollary}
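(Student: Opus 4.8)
The plan is to leverage Proposition~\ref{prop:finite} directly, since it already establishes that each individual set ${\mathcal S}(N,\lambda,\mu^2)$ is finite for every fixed $N\geq 2$. The corollary then follows from an elementary fact about set theory: a countable union of finite sets is countable. Concretely, I would first observe that the index set $\{N\in\NN \mid N\geq 2\}$ is itself countable, being a subset of the natural numbers. For each such $N$, Proposition~\ref{prop:finite} guarantees that ${\mathcal S}(N,\lambda,\mu^2)$ is finite, hence in particular countable.

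The main step is then to invoke the standard result that a countable union of countable (here even finite) sets is countable. I would write ${\mathcal T}(\lambda,\mu^2)=\bigcup_{N\geq 2}{\mathcal S}(N,\lambda,\mu^2)$ as a union indexed by the countable set $\{2,3,4,\dots\}$, with each summand finite. A clean way to exhibit the countability explicitly is to note that each finite set ${\mathcal S}(N,\lambda,\mu^2)$ can be enumerated as $\{J_{N,1},\dots,J_{N,k_N}\}$ for some $k_N\in\NN$, so that ${\mathcal T}(\lambda,\mu^2)$ is the image of a surjection from a subset of $\NN\times\NN$ (the pairs $(N,\ell)$ with $2\leq N$ and $1\leq\ell\leq k_N$); since $\NN\times\NN$ is countable, so is any subset and any image thereof, giving the claim.

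I do not anticipate any genuine obstacle here, as the entire mathematical content is already carried by Proposition~\ref{prop:finite}; the corollary is essentially a bookkeeping statement recording that the finiteness for each fixed $N$ assembles into countability once we let $N$ range over all lattice sizes. The only point that warrants a word of care is that the proposition's hypotheses (namely $N\geq 2$ and $\lambda,\mu$ nonzero) are exactly those inherited by the corollary, so Proposition~\ref{prop:finite} applies verbatim to every term in the union. With that verification in place, the countability conclusion is immediate and the proof is complete.
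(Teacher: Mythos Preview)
Your proof is correct and follows exactly the same approach as the paper: invoke Proposition~\ref{prop:finite} to conclude that each ${\mathcal S}(N,\lambda,\mu^2)$ is finite, then observe that a countable union of finite sets is countable. The paper's version is simply terser, omitting the explicit enumeration argument you spell out.
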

\begin{proof}
It follows from Proposition \ref{prop:finite} that ${\mathcal T(\lambda,\mu^2)}$ is a countable union of finite sets and is therefore countable.
\end{proof}

Corollary~\ref{cor:final} shows that for given $\lambda$ and $\mu^2$, there exist uncountably infinite many values of $J$, which densely cover the real axis, such that the potential energy function \eqref{e:V} is a Morse function. The potential energy function \eqref{e:V} of the nearest-neighbor $\phi^4$ model is therefore a valid counterexample disproving the theorem announced by Franzosi and Pettini in Ref.\ \cite{FraPe04}.

\bibliography{Phi4Long}

\end{document}